\documentclass[lettersize,journal]{IEEEtran}
\usepackage{amsmath,amsfonts}
\usepackage{algorithmic}
\usepackage{algorithm}
\usepackage{array}
\usepackage[caption=false,font=normalsize,labelfont=sf,textfont=sf]{subfig}
\usepackage{textcomp}
\usepackage{stfloats}
\usepackage{url}
\usepackage{verbatim}
\usepackage{graphicx}
\usepackage{cite}
\hyphenation{op-tical net-works semi-conduc-tor IEEE-Xplore}
\usepackage{hyperref}
\usepackage[ruled,vlined,algo2e]{algorithm2e}
\usepackage{amsthm}
\newtheorem{theorem}{Theorem}
\newtheorem{corollary}{Corollary}[theorem]
\theoremstyle{definition}
\newtheorem{definition}{Definition}
\theoremstyle{remark}
\newtheorem*{remark}{Remark}

\begin{document}

\title{Reinforcement Learning Generalization for Nonlinear Systems Through Dual-Scale Homogeneity Transformations}

\author{Abdel Gafoor Haddad,~Igor Boiko,~\IEEEmembership{Senior Member,~IEEE,},~Yahya Zweiri,~\IEEEmembership{Member,~IEEE}
}

\markboth{Submitted to IEEE Transactions on Control Systems Technology, October~2023}%
{Shell \MakeLowercase{\textit{et al.}}: A Sample Article Using IEEEtran.cls for IEEE Journals}


\maketitle

\begin{abstract}
Reinforcement learning is an emerging approach to control dynamical systems for which classical approaches are difficult to apply. However, trained agents may not generalize against the variations of system parameters. This paper presents the concept of dual-scale homogeneity, an important property in understating the scaling behavior of nonlinear systems. Furthermore, it also presents an effective yet simple approach to designing a parameter-dependent control law that homogenizes a nonlinear system. The presented approach is applied to two systems, demonstrating its ability to provide a consistent performance irrespective of parameters variations. To demonstrate the practicality of the proposed approach, the control policy is generated by a deep deterministic policy gradient to control the load position of a quadrotor with a slung load. The proposed synergy between the homogeneity transformations and reinforcement learning yields superior performance compared to other recent learning-based control techniques. It achieves a success rate of 96\% in bringing the load to its designated target with a 3D RMSE of $\mathbf{0.0253}$ m. The video that shows the experimental results along with a summary of the paper is available at \href{https://youtu.be/3VtwJI-p_T8}{this link}.
\end{abstract}

\begin{IEEEkeywords}
Homogeneity, nonlinear systems, reinforcement learning, quadrotor, slung load.
\end{IEEEkeywords}

\section{Introduction}
\label{sec:introduction}
Reinforcement Learning (RL) has evolved as a potential alternative to classical control techniques in controlling dynamic systems under the assumption of having a Markov Decision Process (MDP). It is advantageous for systems that are highly nonlinear, as suggested in~\cite{auto}, and stochastic since it performs the learning in an elegant trial-and-error process as presented by~\cite{sutton}. It found many applications in fixed~\cite{ifacc} and mobile robotics in general~\cite{isa3}, multirotors~\cite{isa1}, and collaborative robotics such as unmanned surface vehicles and multirotors for search and rescue applications~\cite{isa2}.

An open challenge in applying RL to robotics is its generalization to systems that have different values of the parameters than the one used in training. This problem is commonly addressed by training on a high-fidelity simulation model with different parameters and/or improving the policy training process. In the latter category, domain randomization (DR), proposed by~\cite{random1}, is a popular direction of research where a strong variability is induced in the simulator that leads to a model that can deal with such variations. While it has been argued that the benefit of the model improvement beyond a certain point is negligible, both directions impose a huge computational burden. 

When the RL agent is trained in a simulator, it often does not transfer effectively to the real world. This issue is termed sim2real gap. Bridging the sim2real gap through DR has recently gained momentum in robotics applications. The typical element of these techniques is the change of the parameters in the physics simulator, including the system dynamics~\cite{random2}. By this, the trained model gets exposed to a wide range of possibilities of the environment as presented in~\cite{random1}. This resulted in applications where the machine learning model can be trained entirely in simulations and transferred successfully to the physical world without further training in a plethora of supervised~\cite{random1,random4,random6,random9} and reinforcement learning methods~\cite{random2,random3,random5}.

The authors in~\cite{random3} overcame the sim2real gap for flying a drone by randomizing the visual data in the environment. Their policy, composed of a deep neural network, maps the observed monocular images directly to 3D drone velocities. However, dealing with high-precision tasks was lacking in their demonstrated experiments which focused on collision avoidance in open spaces. Although~\cite{random3} used synthetic images, the results are extended by DR~\cite{random1} where it's applied to precision tasks, and the entire training is performed using non-realistic texture, unlike most object detection results in the literature. In~\cite{random8}, it was shown that sampling parameters from uniform distributions in DR may lead to suboptimal policies with high variance. The active DR that they propose seeks the most informative variations in the environment and follows this sampling strategy to train more frequently on these instances. The work of~\cite{random10} employed Bayesian optimization to yield a data-efficient DR. It was found that varying the kinematic parameters during the training can outperform the randomization of the dynamic ones, as presented by~\cite{random11}.

Although plenty of DR methods have been developed, their computational burden is an obstacle for practical implementations. Moreover, the DR strategy does not guarantee the performance under large parameter variations. While it could work on systems with different parameters, a further change beyond a certain range can result in the policy failure as we demonstrate in this paper. Moreover, the DR approach may address only parametric differences between the model used in training and actual dynamics, but not structural differences such as existence of additional dynamics. An effective method to guarantee the performance of a controller is to exploit the homogeneity property of the system~\cite{homobook}. Homogeneity in our work is understood as invariance of certain properties of a system to transformations involving parameters of the system. Therefore, the homogeneity property requires a base controller that can achieve the required control objective for a plant having certain "nominal" parameter values. Designing a base controller for complex nonlinear systems using classical methods requires huge amounts of effort.

One of the benchmark problems that can be considered is the task of swinging up and stabilizing an inverted pendulum. Compared to classical nonlinear control designs such as energy control~\cite{energy}, RL can generate an optimal control policy with minimal knowledge about the system. It is computationally less expensive as compared to model predictive control~\cite{mpc}. The benchmark problem can be used mainly to compare the proposed method to similar learning approaches such as the ones based on DR. The RL methods that employ DR exhibits high robustness to uncertainties in the parameters. For this, we use Recursive Least Squares (RLS) in an adaptation loop to estimate the parameters of the system. The robust feature that comes from the adaptive scheme is combined with the optimality of the base controller that is obtained by RL and its homogeneous form.

Furthermore, the control of a UAV with a slung load system is considered as a more complex robotics application~\cite{basheer} for our method. Load transportation using UAVs allows reaching areas that clearance distance are two valuable traits of this transportation mechanism. Recent works provide solutions to this transportation problem through trajectory planning and control. In~\cite{traj_plan}, the objective is to minimize the travel time of the UAV through an optimal trajectory planning paradigm. In~\cite{diff_flat2}, the UAV with slung load is given by a model for which differential flatness can be applied and a nonlinear control is designed. However, parametric uncertainties are not considered and the UAV/load masses are kept constant. The control in~\cite{min_swing1} is an antisway tracking control based on the linear quadrotor control theory. In contrast to our work which makes use of the load swing, antisway controllers~\cite{min_swing1,min_swing2} minimize the swing of the suspended load during transportation.

In~\cite{rel1}, the authors investigated how to control a quadrotor UAV carrying a hanging load. They suggested an energy-based nonlinear controller to manage the UAV's position and the swing angle of the payload. They also created an adaptive control scheme to account for the wire's unknown length with proven stability. However, the masses of the UAV and load were known and constant. Moreover, a fixed length was selected across all experiments. The work in~\cite{wind} considers uncertainties in the control design, but this uncertainty comes from the wind effect whereas the parameters of the system are kept constant and known. Tracking a load reference position has been the target of recent works~\cite{load_track}. Geometric control is also employed for the tracking problem for a load that is connected to multiple UAVs through rigid~\cite{tcst1} and elastic links~\cite{tcst2}. A similar setup is shown in~\cite{vel_track} where multiple UAVs sharing a suspended payload track a reference velocity through a cooperative control algorithm. However, our work focuses on the last step of the load delivery which is to bring the load to a certain position by exploiting the swing feature rather than tracking a reference trajectory.

In this work, we vary the length of the slung load cable to demonstrate that the homogeneity property can aid the RL-based nonlinear policies and lead to a performance that exceeds the capabilities of plain RL policies. Moreover, this performance remains invariant upon parameter variations.

Therefore, this paper addresses the classical control and DR shortcomings through combining ideas from homogeneous systems, adaptive control, and learning-based control, leading to the following contributions:
\begin{enumerate}
    \item Developing a homogeneous control approach that is insensitive to variations in the parameters of nonlinear systems.
    \item Combining a single reinforcement learning policy which is trained for one set of parameters with the developed homogeneity transformations to consistently attain optimal performance regardless of variations in system parameters.
    \item Conducting simulations and experiments to validate the developed approach on a quadrotor with a slung load in which a superior performance compared to the recent learning-based control techniques was achieved.
\end{enumerate}

The organization of the rest of the paper is as follows. In Section~\ref{prob_form}, the problem formulation is precisely given. The dual-scale homogeneity is defined and detailed. Section~\ref{approach} starts with the RL base control design and presents the homogeneity for the specific cases of the inverted pendulum and driver with slung load system. Section~\ref{results} shows extensive simulation results as well as real-world UAVs with slung load experimental results. Finally, Section~\ref{conclusion} concludes the work with some remarks on the achieved results, their importance, and the possible future extensions.

\section{Problem Formulation}
\label{prob_form}
\begin{definition}[Dual Scale Homogeneity]
Consider the nonlinear system modeled by the state equation
\begin{equation}
    \dot{\pmb x}=f(\pmb x,\pmb \alpha),~~~~\pmb x(0)=\pmb x_o
    \label{nonlinear}
\end{equation}
where $\pmb \alpha$ is the parameter vector that can, but not necessarily, vary in real-time. The states and parameters vectors are expressed as $\pmb x=[x_1~x_2~...~x_n]^T$ and $\pmb \alpha=[\alpha_1~\alpha_2~...~\alpha_{j}]^T$, respectively. If there exists a transformation $\tau=\zeta t$, $\pmb z=g(\pmb x,\pmb \alpha,\zeta)$, $\pmb z\in R^n$, and $\pmb \beta=h(\pmb \alpha,\zeta)$, $\pmb \beta\in R^k$, where $k<j$ such that the solution of the system
\begin{equation}
    \frac{d\pmb {z}}{d\tau}=f^*(\pmb z,\pmb \beta),~~~~\pmb z(0)=\pmb z_o=g(\pmb x_o,\pmb \alpha,\zeta)
    \label{nonlinear2}
\end{equation}
in scaled time $\tau$ satisfies the condition
\begin{equation}
    \pmb z(\tau)=\kappa \pmb x(\zeta t)
\label{eq_defi}
\end{equation}
then the systems~\eqref{nonlinear} and~\eqref{nonlinear2} are dual-scale\footnote{with respect to time and magnitude transformations.} homogeneous, and the property~\eqref{eq_defi} is the dual-scale homogeneity.
\end{definition}

Since
\begin{equation}
\begin{aligned}
    \frac{d\pmb {z}}{d\tau}&=\frac{\partial g}{\partial\pmb {x}}\frac{d\pmb {x}}{d\tau} + \frac{\partial g}{\partial \pmb {\alpha}}\frac{d\pmb {\alpha}}{d\tau} + \frac{\partial g}{\partial \zeta}\frac{d\zeta}{d\tau}\\
    &=\frac{\partial g}{\partial \pmb {x}}\dot{\pmb {x}}\frac{1}{\zeta}\\
    &=\frac{1}{\zeta}\frac{\partial g}{\partial\pmb {x}}f(\pmb {x},\pmb {\alpha}),
    \label{rhs}
\end{aligned}
\end{equation}
The problem is solved by finding the transformation $g(\pmb {x},\pmb {\alpha},\zeta)$ such that the right-hand side of~\eqref{rhs} has a number of parameters $k$ smaller than $j$.

In~\eqref{eq_defi}, $\kappa$ is a scalar and $\zeta$ is a non-negative scalar that represents the amplitude and time scaling of the system states. Despite the presence of structured invariance, this property is mathematically distinct from the traditional homogeneity property~\cite{homobook}.

The general spatio-temporal scale in~\eqref{eq_defi} has two special cases of interest:

(i) $\kappa=1$. In this case, the state has an identical amplitude profile but is temporally scaled.

(ii) $\zeta=1$. Here, the state has the same time profile but is spatially scaled.

\begin{remark}
Applying the spatio-temporal transformations given by $\tau=\zeta t$ and $\pmb z=g(\pmb x,\pmb \alpha,\zeta)$ to the system~\eqref{nonlinear} yields the transformed dynamics
\begin{equation}
    \frac{d\pmb{z}}{d\tau}=f(\frac{1}{\zeta}\pmb{z},\pmb{\alpha}).
    \label{sys4}
\end{equation}
That is, if one of the elements of $\pmb{x}$, such as $x_1$ is present in the function $f$ with coefficient $\alpha_1$, then selecting $\zeta=\alpha_1$ would result in the elimination of this coefficient in the equation of $z_1$. Thus, the system~\eqref{sys4} would have $k=j-1$ parameters present in the transformed dynamics.
\end{remark}

\begin{definition}[Homogenizing Control]
Let $\Bar u(.)$ in a system under control be a static map from system observations to control effort that is designed to achieve a specific control performance on a system that has nominal parameters lumped in the vector $\Bar{\pmb{\alpha}}$. Furthermore, let the nominal system have states $\Bar x$ in~\eqref{eq_defi} and the parameter-perturbed system be with states $x$. A dynamic system is dual-scale homogeneous with respect to its physical parameters if there exists a set of transformations to the control and feedback signals
\begin{equation}
    u=c_0\Bar{u}(c_1x_1,c_2x_2,...c_nx_n)
    \label{u_scaled}
\end{equation}
for which~\eqref{eq_defi} holds given that the initial states are scaled similarly as
\begin{equation}
    \pmb z(0)=\kappa [x_1(0)~~x_2(0)~~...~~x_n(0)]^T.
\end{equation}
The expression in~\eqref{u_scaled} is a homogenizing control.
\end{definition}
The problem boils down to finding the constants in~\eqref{u_scaled} which are typically functions of the system parameters $c_i=f(\pmb\alpha)$.

The homogeneity property is illustrated in Fig.~\ref{homog}.
\begin{figure*}[htbp]
\centerline{\includegraphics[width=14cm]{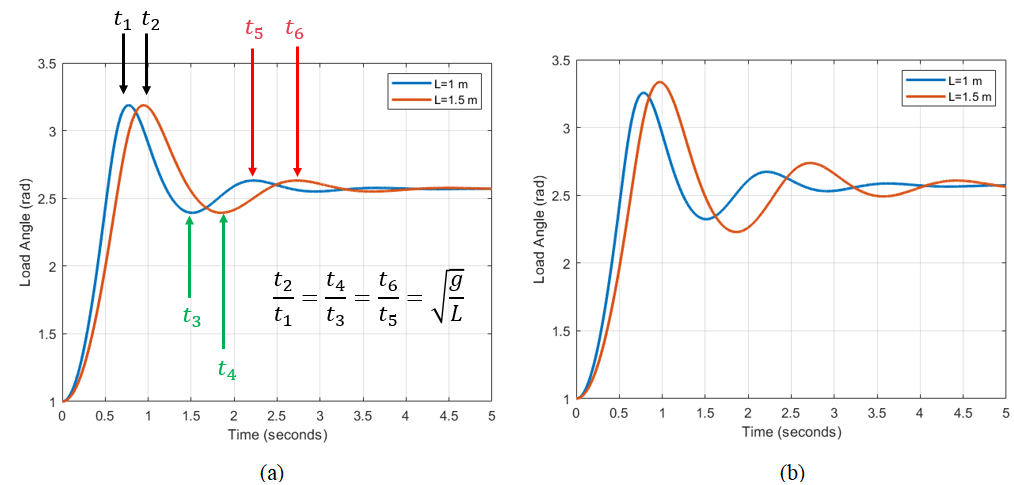}}
\caption{Illustration of the homogeneity property on the same plant: (a) homogeneity transformations are applied, hence, there exist $\kappa$ and $\zeta$ for which the transformation~\eqref{eq_defi} holds; (b) homogeneity transformations are not applied, thus,~\eqref{eq_defi} does not hold for any values of $\kappa$ and $\zeta$.}
\label{homog}
\end{figure*}

\begin{remark}
Consider the system
    \begin{equation}
        \begin{aligned}
    &\dot{\pmb {x}}=f(\pmb {x},\pmb {\alpha},u)\\
    &u=h(\pmb{x})
        \end{aligned}
    \end{equation}
to which the time scale $\tau=\zeta t$ will be applied. If the coordinate transformation $\pmb{z}=g(\pmb{x},\pmb{\alpha},\zeta)$ is a one-to-one map, the original coordinate can be expressed as
\begin{equation}
    \pmb{x}=g^{-1}(\pmb{z},\pmb{\alpha},\zeta).
\end{equation}
Therefore, the transformed system in the new coordinates $\pmb{z}$ and scaled time $\tau$ is given by
\begin{equation}
    \begin{aligned}
        &\frac{d\pmb{z}}{d\tau}=\frac{1}{\zeta}\frac{\partial g}{\partial\pmb{x}}f(g^{-1}(\pmb{z},\pmb{\alpha},\zeta),\pmb{\alpha},u)\\
        &u=h(g^{-1}\left(\pmb{z},\pmb{\alpha},\zeta)\right).
    \end{aligned}
\end{equation}
\end{remark}

This property allows the design of controllers that work for a wide range of parameters as will be demonstrated in the coming sections. It is especially beneficial when the control design is involved or computationally expensive, which is the case of modern reinforcement learning policies. In such conditions, the performance of the obtained policy can be harnessed despite the change in the system parameters without the need to retrain the agent.

\section{Approach}
\label{approach}
\subsection{Base Control Design}
The design of a nonlinear base control $u$ demonstrates the capability of the homogeneity transformations. For this, reinforcement learning (RL) is used. RL algorithms enable an agent to interact with an environment and learn through the data generated through that interaction in discrete time steps. The environment for the RL setup is considered an MDP. A finite MDP is composed of the following: 1) the state space $s = \{s_1, s_2,..., s_n\}$ composed of $n$ states and for which the vector $s_t$ refers to the state of the system at time $t$; 2) the set of actions that the agent can take at each state $A(s)$, where $a_t\in A(s_t)$ refers to the action taken by the agent at time $t$; 3) the transition function $T(s, a, s')$ that maps states and actions pair ($s$, $a$) to a probability distribution over next states $s'$; and 4) reward function $R(s, a, s')$ that computes the reward when moving from state $s$ to $s'$ as action $a$ is executed, where $r_t$ refers to the reward that the agent receives at time $t$. The flow of these signals during the training is shown in the block diagram in Fig~\ref{train}.

\begin{figure}[htbp]
\centerline{\includegraphics[width=4cm]{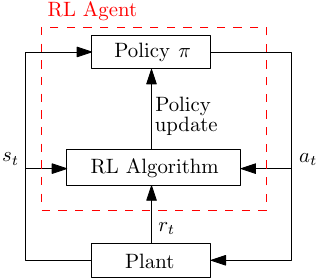}}
\caption{The flow of states, actions, and rewards during the training of RL agent. A nominal model of the plant is used for training in our method whereas the plant parameters are varied in DR.}
\label{train}
\end{figure}

In the setting of an RL problem, the agent aims to map the states to actions by learning, which leads to an optimal policy. Such policy is given as the mapping that yields the largest cumulative discounted rewards from all possible states in the next steps. The RL algorithm has to estimate the value function in order to learn the optimal policy. An agent that is in state $s$, takes an action $a$, and uses policy $\pi$ in the next steps has an expected cumulative discounted future reward $Q^\pi(s, a)$. Mathematically, the value function can be expressed as in \eqref{eqnQpolicy}.
\begin{equation}
\label{eqnQpolicy}
    Q^\pi(s,a)=\mathbb{E}\left(\sum_{i=0}^\infty{\gamma^ir_i|s_0=s, a_0=a,\pi}\right)
\end{equation}
where the discount factor $0\leq\gamma\leq1$ can be used to reduce the size of subsequent rewards relative to the immediate ones. Maximizing the average reward intake is another potential objective.

The agent can execute the action corresponding to the largest value in a state if the optimal $Q$-function $Q^*$ is known by $\pi^*(s)=arg max_a Q^*(s, a)$. In the experiments, we utilize the Deep Deterministic Policy Gradient (DDPG), an off-line model-free RL algorithm which we summarize in the next subsection.

The reinforcement learning agent is trained as per the DDPG algorithm by minimizing the loss~\cite{ddpg}
\begin{equation}
    L(\theta)=\mathbb{E}\left[(y-Q(s,a;\theta))^2\right]
    \label{loss}
\end{equation}
where the function approximator is parametrized by $\theta$.

Like other deep reinforcement learning methods, DDPG utilizes deep neural networks as value function approximator. It uses deterministic policy approximated by an actor neural network $\pi(s;\theta^\pi)$ that is a function of the $s$ and has weights $\theta^\pi$, and a critic network $Q(s,a;\theta^Q)$ which employs the Bellman equation to perform the update
\begin{equation}
    Q(s_t,a_t)=\mathbb{E}_{r_t,s_{t+1}}\left[r(s_t,a_t)+\gamma Q(s_{t+1},\pi(s_{t+1}))\right].
\end{equation}

On the other hand, the actor is updated through the chain rule applied to the loss function and updating $\theta^\pi$ based on the direction of the gradient of the loss~\eqref{loss} with respect to the weights
\begin{align*}
        \nabla_{\theta^\pi}L&\approx \mathbb{E}_s\left[\nabla_{\theta^\pi}Q(s,\pi(s|\theta^\pi)|\theta^Q)\right]\\
    &=\mathbb{E}_s\left[\nabla_{a}Q(s,a|\theta^Q)|_{a=\pi(s|\theta^\pi)}\nabla_{\theta^\pi}\pi(s|\theta^\pi)\right].
\end{align*}

As DDPG is an off-policy algorithm, the exploration behavior is encouraged by adding random noise $N$ to the policy, allowing the agent to explore areas in the action space. The stability of the training is enhanced by experience replay buffer $R$ and separate target networks. The latter can be done by having two additional networks, a target actor $\pi'$ and a target critic $Q'$ to compute the target $Q$ values. The update routine is included in the overall procedure in Algorithm~\ref{alg:two}. In order to avoid the temporal autocorrelation problem, the experience replay buffer stores the previous action-rewards and samples them to train the networks rather than using real-time data.

In this section, the mathematical models of the systems under consideration are used to obtain the corresponding homogeneity transformations. This step can be added at the beginning of ensemble methods to significantly improve their efficiency by reducing the dimension of the parameter space.

\subsection{Homogenization of Fixed Pendulum}
The nonlinear model of the inverted pendulum with negligible inertia and damping is given by
\begin{subequations}\label{eqab}
\begin{equation}
    \dot{x}_1 = x_2
\end{equation}
\begin{equation}
    \dot{x}_2 = \frac{g}{l}sin(x_1)+u\frac{1}{ml^2}
\end{equation}
\end{subequations}
where $x_1 = \Theta$ and $x_2 = \dot{\Theta}$. The pendulum length and mass are denoted by $m$ and $l$, respectively. The torque $u$ is applied directly to the pendulum pivot and $g$ is the gravitational acceleration constant.

Definition: The trained RL agent at $m=1$ kg and $l=g$ m is $\pi_n(z_1,z_2)$ which takes as arguments a transformed set of states and gives a normalized torque.

\begin{theorem}\label{th1}
For an arbitrary length $l$ and mass $m$, the control effort $u$ given by
\begin{equation}
    u=\frac{ml}{g}\pi_n\left(x_1,\sqrt{\frac{l}{g}}x_2\right)
    \label{homo_u}
\end{equation}
guarantees the property~\eqref{eq_defi} with both its conditions (i) and (ii) of the system~\eqref{eqab} for any perturbation in the value of $m$, $\dot{x}=y_1(t)$, and guarantees~\eqref{eq_defi} with condition (i) for any perturbation in the value of $l$.
\end{theorem}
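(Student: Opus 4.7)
The plan is to substitute the homogenizing control $u=\frac{ml}{g}\pi_n\!\left(x_1,\sqrt{l/g}\,x_2\right)$ into \eqref{eqab} and then select a spatio-temporal transformation $(\pmb z,\tau)$ so that the closed-loop dynamics match those of the nominal plant ($m=1$, $l=g$) driven by $\pi_n$. Once the ODEs and the initial conditions coincide, uniqueness of solutions yields the identity $\pmb z(\tau)=\kappa\pmb x(\zeta t)$ required by \eqref{eq_defi}.

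For the mass perturbation, observe that $m$ enters \eqref{eqab} only through the coefficient $1/(ml^2)$ multiplying the control. A direct substitution shows that the closed-loop dynamics in the original $(x_1,x_2,t)$ coordinates reduce to
\[
\dot x_2=\frac{g}{l}\sin(x_1)+\frac{1}{gl}\pi_n\!\left(x_1,\sqrt{l/g}\,x_2\right),
\]
which contains no $m$ whatsoever. Hence, for any fixed $l$, varying $m$ leaves the trajectory unchanged, and the trivial transformation $z_i=x_i$, $\tau=t$ (so $\kappa=\zeta=1$) makes the perturbed and nominal systems identical. This establishes \eqref{eq_defi} with conditions (i) and (ii) holding simultaneously.

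For the length perturbation (with $m$ at its nominal value), I propose the scaling $z_1=x_1$, $z_2=\sqrt{l/g}\,x_2$, $\tau=\sqrt{g/l}\,t$, giving $\kappa=1$ and $\zeta=\sqrt{g/l}$. Applying the chain rule as in \eqref{rhs}, I compute $dz_1/d\tau=\sqrt{l/g}\,x_2=z_2$ and $dz_2/d\tau=(l/g)\dot x_2=\sin(z_1)+\pi_n(z_1,z_2)/g^2$, which are precisely the nominal equations obtained by setting $m=1$ and $l=g$ in \eqref{eqab} with $\bar u=\pi_n$. Since the initial data transform accordingly, the trajectories agree and \eqref{eq_defi} holds with condition (i).

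The main obstacle is pinning down the unique exponent $\sqrt{l/g}$ that must appear both in the second argument of $\pi_n$ and in the time stretch. It is dictated by the joint requirement that the $g/l$ factor of the gravity term and the $1/(ml^2)$ factor of the control term cancel simultaneously after the chain-rule manipulation; neither scale can be chosen in isolation. Once this compatibility between $\zeta$ and the scaling of $x_2$ is fixed, the remaining work is a routine algebraic verification plus invocation of uniqueness of solutions to the closed-loop ODE.
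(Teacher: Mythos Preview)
Your proposal is correct and follows essentially the same approach as the paper: the same time scale $\zeta=\sqrt{g/l}$ and state scaling $z_2=\sqrt{l/g}\,x_2$ are used to reduce the closed loop to the parameter-free form $dz_2/d\tau=\sin(z_1)+\pi_n(z_1,z_2)/g^2$. The only cosmetic differences are that you propose the scaling up front and verify it (the paper derives $\zeta^2=g/l$ from the requirement $z_2=dx_1/d\tau$), and that you separate the mass case explicitly and invoke uniqueness of ODE solutions, which the paper leaves implicit.
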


\begin{proof}
The homogenization is performed as follows. Introduce a scaled time
\begin{equation}
    \tau=\zeta t
\end{equation}
where the time is accelerated when $\zeta>1$ and decelerated for $\zeta<1$. With this, the time derivatives of the states become
\begin{equation}
    \dot{x}_1=\frac{dx_1}{dt}=\zeta \frac{dx_1}{d\tau}
\end{equation}
\begin{equation}
    \dot{x}_2=\zeta \frac{dx_2}{d\tau}.
\end{equation}
The state equations are thus given by
\begin{subequations}\label{scaledab}
\begin{equation}
    \zeta \frac{dx_1}{d\tau}=x_2
\end{equation}
\begin{equation}
    \zeta \frac{l}{g} \frac{dx_2}{d\tau}=sin(x_1)+u\frac{1}{mgl}.
\end{equation}
\end{subequations}

Introduce the new state variables
\begin{equation}
    z_1 = x_1
\end{equation}
\begin{equation}
    z_2 = \zeta \frac{l}{g}x_2=\frac{\zeta^2l}{g}\frac{dx_1}{d\tau}.
\end{equation}
If we select $\zeta^2=g/l$, then we have
\begin{equation}
    z_2=\frac{dx_1}{d\tau}
\end{equation}
which yields
\begin{equation}
    \frac{dz_2}{d\tau}=\zeta\frac{l}{g}\frac{dx_2}{d\tau}.
\end{equation}
Substitute into the time-scaled state equations~\eqref{scaledab}
\begin{equation}
    \frac{dz_1}{d\tau}=z_2
\end{equation}
\begin{equation}\label{finaleq}
    \frac{dz_2}{d\tau}=sin(z_1)+f(z_1,z_2).
\end{equation}
where $f(z_1,z_2)=u(z_1,z_2)/mgl$. Since $u(z_1,z_2)$ can be selected such that~\eqref{finaleq} is not a function of any of the system parameters, the response of~\eqref{scaledab} is invariant to the change in $m$ and $l$, and the response of the original system~\eqref{eqab} is scaled by the time scaling used in the transformation.
\end{proof}
\begin{corollary}
The training of the RL agent is only needed at one point in the parameter space: $m=1$ kg and $l=g$ m. This is true since the policy is a static map given by the structure of a deep neural network with rectified linear unit (ReLU) and hyperbolic tangent activation functions.
\end{corollary}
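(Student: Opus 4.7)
The plan is to combine Theorem~\ref{th1} with the observation that the nominal point $m=1$ kg and $l=g$ m is precisely the fixed point of the homogenizing transformation, so that training there directly yields the canonical policy that Theorem~\ref{th1} demands.

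First I would substitute $m=1$ and $l=g$ into~\eqref{homo_u} and into the auxiliary quantities derived in the proof of Theorem~\ref{th1}: the time-scale factor becomes $\zeta=\sqrt{g/l}=1$, the amplitude factor on $x_2$ becomes $\sqrt{l/g}=1$, and the torque prefactor $ml/g$ becomes $1$. Hence at the nominal parameters the homogenizing control collapses to $u=\pi_n(x_1,x_2)$, the scaled time coincides with physical time, and the transformed system~\eqref{finaleq} coincides with the physical system~\eqref{eqab}. Therefore a DDPG agent trained on~\eqref{eqab} at these parameter values is, by construction, trained on the canonical parameter-free dynamics~\eqref{finaleq}, and its converged output is exactly the map $\pi_n$ that Theorem~\ref{th1} treats as given.

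Next, for any other $(m,l)$, I would note that the homogenizing control law~\eqref{homo_u} evaluates $\pi_n$ instantaneously at the transformed state $(x_1,\sqrt{l/g}\,x_2)$. The crucial observation here is that the agent's policy, implemented as a feedforward deep network with ReLU and tanh activations, is a \emph{static} input-output map: it carries no internal dynamical state, no explicit time variable, and no parameter memory, so its values are unaffected by the reparametrizations $\tau=\zeta t$ and $\pmb z = g(\pmb x,\pmb\alpha,\zeta)$. Combining this with Theorem~\ref{th1} yields that the closed-loop response in scaled coordinates equals the nominal response, so optimality achieved at the training point propagates to every $(m,l)$ without retraining.

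The main obstacle I expect is not a calculation but a conceptual one: justifying that an optimum obtained empirically by DDPG at one parameter setting remains an optimum for the whole family. I would resolve this by appealing to the parameter-independence of~\eqref{finaleq}, which means any reward function expressed in the transformed coordinates $(z_1,z_2)$ and normalized control $u/(mgl)$ is itself invariant under the homogeneity transformation; hence the same policy $\pi_n$ that maximizes the discounted return in the nominal case also maximizes the pulled-back return in the perturbed case. With this invariance noted, the corollary follows as an immediate consequence of Theorem~\ref{th1}.
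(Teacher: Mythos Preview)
Your proposal is correct and follows the same reasoning the paper relies on: the paper offers no separate proof for this corollary, treating it as an immediate consequence of Theorem~\ref{th1} together with the observation (embedded in the corollary statement itself) that the policy is a static neural-network map. Your argument simply unpacks this in more detail---verifying that $(m,l)=(1,g)$ is the fixed point of the transformation so that training there yields $\pi_n$ directly, and spelling out why a memoryless feedforward network commutes with the coordinate/time rescaling---which is exactly the content the paper leaves implicit.
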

\begin{corollary}
The time-scaled integral time-weighted absolute error (TS-ITAE) is invariant with respect to the physical parameters of the system. It is expressed as
\begin{equation}
    Q_{TS-ITAE}=\int_0^\infty{|z_i|\tau~d\tau}=Q_c=const
    \label{itae}
\end{equation}
where $Q_c$ is a constant cost which generally varies when the policy is modified but is invariant with respect to the changes in the system parameters.
\label{cor_itae}
\end{corollary}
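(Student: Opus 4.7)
The plan is to leverage the parameter‑free form of the transformed dynamics that was established in the proof of Theorem~\ref{th1}, and observe that $Q_{TS\text{-}ITAE}$ is defined entirely in the transformed coordinates $(z_i,\tau)$, so it cannot depend on any quantity that has been scaled away.

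First I would recall that substituting the homogenizing control \eqref{homo_u} into \eqref{scaledab} with the time‑scale choice $\zeta=\sqrt{g/l}$ and the state transformation $z_1=x_1$, $z_2=\sqrt{l/g}\,x_2$ reduces the closed‑loop dynamics to \eqref{finaleq}, whose right‑hand side $\sin(z_1)+\pi_n(z_1,z_2)/g^{2}$ contains neither $m$ nor $l$. Since $\pi_n$ is a fixed static map (a deep network with ReLU and $\tanh$ activations, by Corollary~1), the closed‑loop vector field in $z$‑coordinates is a fixed, parameter‑independent function.

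Next I would argue that the initial conditions in the transformed space, $z_1(0)=x_1(0)$ and $z_2(0)=\sqrt{l/g}\,x_2(0)$, are specified once by the control task (for the inverted pendulum swing‑up/stabilization one typically has $x_2(0)=0$, so $z_2(0)=0$ independently of $l$). Given a parameter‑independent vector field together with a parameter‑independent initial condition, existence and uniqueness of solutions to the ODE in $\tau$ forces the entire trajectory $z_i(\tau)$ to be a single, fixed function of $\tau$ regardless of the physical values of $m$ and $l$. Consequently every functional of that trajectory — including the weighted integral $\int_0^\infty |z_i(\tau)|\,\tau\,d\tau$ — evaluates to a number that depends only on the chosen policy $\pi_n$ and on the (fixed) initial condition, which is precisely the constant $Q_c$ appearing in \eqref{itae}.

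The main obstacle I anticipate is not algebraic but interpretive: one must take care that the integration is performed in the scaled time $\tau$ and against the scaled state $z_i$, not in the physical time $t$ against $x_i$. If one slipped and computed $\int_0^\infty |x_i(t)|\,t\,dt$ instead, the Jacobian of the change of variables $\tau=\zeta t$ together with $z_2=\sqrt{l/g}\,x_2$ would reintroduce explicit factors of $l$ and $g$, so the invariance would be lost. I would therefore explicitly state the change of variables, note that because $\zeta$ is time‑independent we have $d\tau=\zeta\,dt$, and conclude that it is exactly the TS‑ITAE cost — and not the standard ITAE — that is rendered parameter‑invariant by the dual‑scale homogenization. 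A brief closing remark can then be added pointing out that changing the policy $\pi_n$ changes the value of $Q_c$, but never its independence from $m$ and $l$.
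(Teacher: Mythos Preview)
Your argument is correct and is exactly the reasoning the paper intends: the corollary is stated immediately after Theorem~\ref{th1} with no separate proof, relying on the fact established there that the closed-loop dynamics in $(z,\tau)$-coordinates are free of $m$ and $l$, so any functional of $z_i(\tau)$---in particular the TS-ITAE---is parameter-invariant. Your explicit treatment of the initial condition (noting $z_2(0)=\sqrt{l/g}\,x_2(0)$ is parameter-free only when $x_2(0)=0$, which is the swing-up scenario) and your warning that the invariance holds for the $\tau$-integral and not the physical-time ITAE are refinements the paper leaves implicit.
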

Corollary~\ref{cor_itae} would mean that if the produced policy is optimal in the sense of~\eqref{itae} then this policy will also be optimal for other system parameters, once the derived homogeneity transformations are applied.

The control in~\eqref{homo_u} has the normalized policy $\pi_n$ as a function of the states and the system parameters. A prior knowledge of the parameters $\psi=[l,m]^T$ would allow a direct execution of the control law. However, an identification algorithm is required in case of missing or inaccurate values of the parameters. Furthermore, for time-varying parameters and to avoid interruption of operation, it is advantageous to have an algorithm that works in real-time. For this purpose, we apply RLS with forgetting factor as given by~\cite{sysid}
\begin{equation}
    \hat{\psi}(t)=\hat{\psi}(t-1)+L(t)\left[y(t)-\phi^T(t)\hat{\psi}(t-1)\right]
    \label{rls}
\end{equation}
\begin{equation}
    L(t)=\frac{P(t-1)\phi(t)}{\lambda(t)+\phi^T(t)P(t-1)\phi(t)}
\end{equation}
\begin{equation}
    P(t)=\frac{1}{\lambda(t)}\left[P(t-1)-\frac{P(t-1)\phi(t)\phi^T(t)P(t-1)}{\lambda(t)+\phi^T(t)P(t-1)\phi(t)}\right]
\end{equation}
where $\hat{\psi}$ is the estimated parameters vector, $y(t)$ is the measured output, $\phi$ is the regressors vector, $\lambda$ is the forgetting factor, and $P$ is the parameters covariance matrix. The overall procedure is summarized in Algorithm~\ref{alg:two}. 

\begin{algorithm2e}
\caption{Homogeneous Policy Based on Deep Deterministic Policy Gradient}\label{alg:two}
\KwData{Simulation environment: a frictionless pendulum model}
\KwResult{Homogeneous DDPG policy that swings up and stabilizes the pendulum with minimal effort and for any physical parameters}
Model transformation\\
Set parameters to the normalized values\\
Randomly initialize weights of $Q$ and $\pi$\\
Initialize target networks weights $\theta^{Q'}\gets \theta^Q$, $\theta^{\pi'}\gets \theta^\pi$\\
Initialize replay buffer $R$\\
\For{each episode}{
Initialize random process $N$ for action exploration\\
Receive initial observation state

\For{each step of episode}{
Execute DDPG training algorithm
}
}

Estimate the system parameters $\hat{\psi}$ by~\eqref{rls}\\
Transform $\pi$ to homogeneous control $u$ by~\eqref{homo_u}
\end{algorithm2e}

The execution of the policy with homogeneous transformation is shown in Fig.~\ref{homo}. The inputs to the RLS algorithm, which contains preprocessing and post-processing operations, are the true output $y(t)$ and the regressors $\phi(t)$. The latter is composed of the control $\phi_1(t)=u(t)$ which is the torque in this case, and $\phi_2(t)=sin(\Theta)$. The output $y(t)=\Ddot{\Theta}$ is the approximate derivative of the angular velocity which is provided by an encoder. All signals are discretized using the zero-order hold. The post-processing involves simple mathematical manipulations to retrieve the physical parameters. The coefficient of the second regressor is $g/l$ from which the length is obtained. Consequently, the mass is deduced from the coefficient of the first regressor which is $1/(ml^2)$.

\begin{figure}[htbp]
\centerline{\includegraphics[width=7cm]{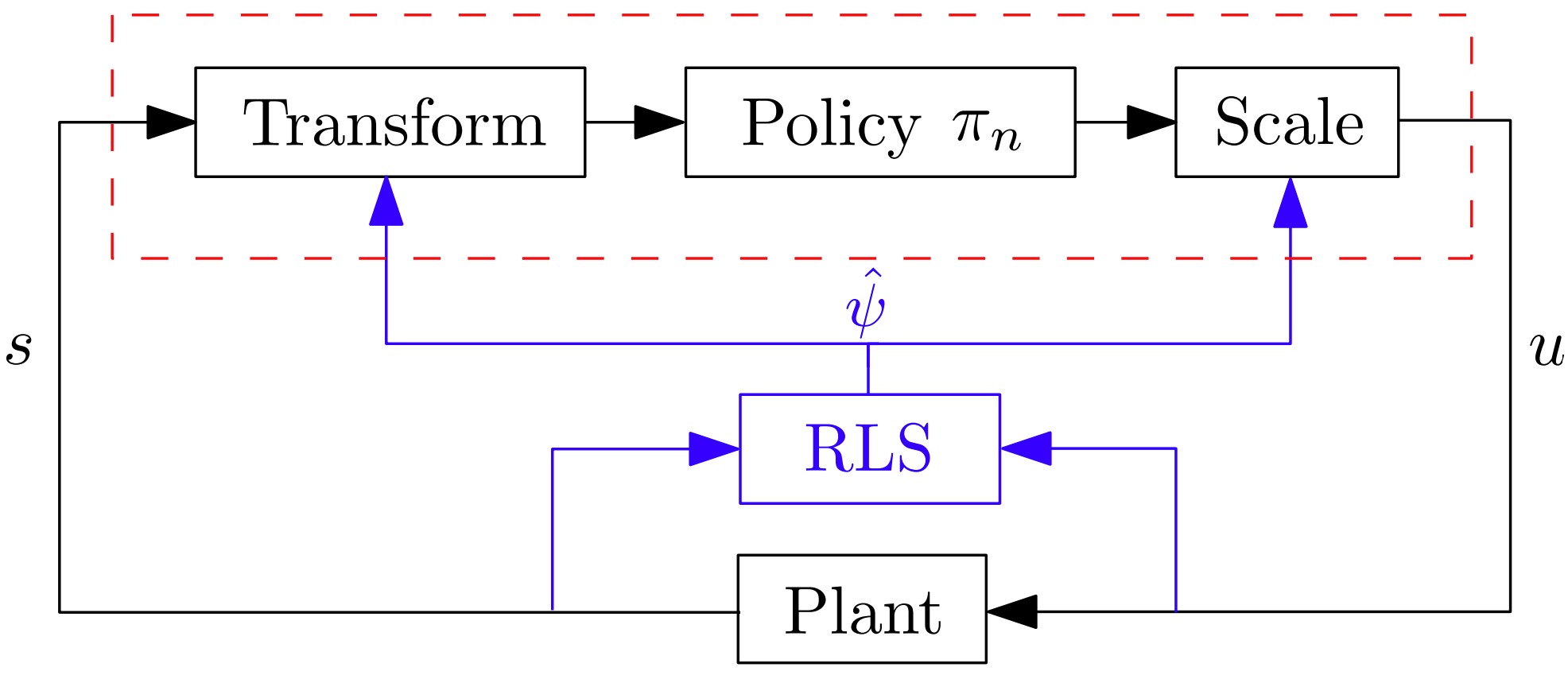}}
\caption{Block diagram of the system connection for the policy with homogeneous transformation. The adaptation loop is highlighted in blue. The state vector $s$ is composed of the two states $x_1$ and $x_2$.}
\label{homo}
\end{figure}

\subsection{Homogenization of Driver-Load}
Consider a model of a quadrotor moving in one dimension and carrying a load which makes an angle $\theta$. Neglecting the attitude dynamics, the model can be written as

\begin{equation}
\label{eq:qq1}
    (M+m)\Ddot{x}+mlsin(\theta)\dot{\theta}^2-mlcos(\theta)\Ddot{\theta}=F
\end{equation}
\begin{equation}
\label{eq:qq2}
        l\Ddot{\theta}-cos(\theta)\Ddot{x}+gsin(\theta)=0
\end{equation}
where $l$ is the length between the quadrotor and the load, $F$ is the applied force, and $M$ and $m$ are the masses of the quadrotor and load, respectively. Let $x_1=\theta$, $x_2=\dot{\theta}$, $x_3=x$, $x_4=\dot{x}$.

\begin{theorem}
For an arbitrary length $l$ and mass ratio $\mu=m/(M+m)$, the control effort $u$ given by
\begin{equation}
    u=mg\pi_n\left(x_1,\sqrt{\frac{l}{g}}x_2,\frac{1}{l}x_3,\frac{1}{\sqrt{lg}}x_4\right)
\end{equation}
guarantees the property~\eqref{eq_defi} with condition (i) in the state $\theta$ for any value of $l$, and~\eqref{eq_defi} in its general form for the position state.
\end{theorem}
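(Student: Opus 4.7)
The plan is to mirror the proof of Theorem~\ref{th1}: introduce a scaled time $\tau=\zeta t$, define a coordinate change $(z_1,z_2,z_3,z_4)$ chosen so that the arguments of $\pi_n$ in the theorem coincide with $(z_1,dz_1/d\tau,z_3,dz_3/d\tau)$, pick $\zeta=\sqrt{g/l}$, and verify that both equations of motion collapse to a form containing only the mass ratio $\mu$. The difference from the pendulum case is that the plant is underactuated, with coupling between the angular and translational dynamics, so both second-order equations must simultaneously lose their physical parameters, and the translational coordinate must be rescaled in amplitude as well as in time.

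First, using $\dot x_i=\zeta\,dx_i/d\tau$ and $\ddot x_i=\zeta^{2}\,d^{2}x_i/d\tau^{2}$, I would substitute into~(\ref{eq:qq2}) and divide by $l\zeta^{2}$. Setting $z_3:=x_3/l$ so that $d^{2}z_3/d\tau^{2}=(1/l)d^{2}x_3/d\tau^{2}$, the angle equation becomes $d^{2}z_1/d\tau^{2}-\cos(z_1)\,d^{2}z_3/d\tau^{2}+(g/(l\zeta^{2}))\sin(z_1)=0$, and the choice $\zeta^{2}=g/l$ removes the last parameter. The induced velocity scalings $z_2=\sqrt{l/g}\,x_2=dz_1/d\tau$ and $z_4=x_4/\sqrt{lg}=dz_3/d\tau$ are exactly the arguments appearing inside $\pi_n$ in the theorem statement, which already confirms that the proposed control is of the homogenising form~\eqref{u_scaled}.

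Next, I would repeat the substitution in~(\ref{eq:qq1}) and divide by $(M+m)\zeta^{2}l$, obtaining $d^{2}z_3/d\tau^{2}+\mu\sin(z_1)\,z_2^{2}-\mu\cos(z_1)\,dz_2/d\tau = F/((M+m)g)$. Inserting $F=mg\,\pi_n(z_1,z_2,z_3,z_4)$ turns the right-hand side into $\mu\,\pi_n(z_1,z_2,z_3,z_4)$, and together with the reduced angle equation this closes a four-dimensional system in $(z_1,z_2,z_3,z_4)$ whose only surviving parameter is $\mu$. Hence for any $l$ (and any $g$, $m$, $M$ producing the same $\mu$) the transformed trajectory is identical, which yields $\theta(t)=z_1(\sqrt{g/l}\,t)$ in the $\kappa=1$ form (condition~(i)) and $x(t)=l\,z_3(\sqrt{g/l}\,t)$ in the general form with spatial scale $\kappa=1/l$, which is precisely the claim.

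The main obstacle is conceptual rather than computational: $\mu$ cannot be absorbed by any static rescaling of $u$, because it multiplies state-dependent nonlinear terms on the left-hand side of~(\ref{eq:qq1}) rather than sitting on the control channel. The theorem must therefore be read as \emph{$\pi_n$ is trained at the nominal mass ratio, and $l$ (equivalently the cable length and $g$) may then be varied freely without retraining}. I would also note explicitly that Definition~1 permits distinct components of the coordinate map $g(\pmb x,\pmb\alpha,\zeta)$ to carry different amplitude scales, so the co-existence of $\kappa=1$ for $\theta$ and $\kappa=1/l$ for $x$ is fully consistent with~\eqref{eq_defi} provided the overall map remains one-to-one, which is trivially the case here since the individual scalings are invertible linear relations in $l$ and $g$.
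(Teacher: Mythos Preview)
Your proposal is correct and uses the same time scale $\tau=\sqrt{g/l}\,t$ and the same coordinate map $(z_1,z_2,z_3,z_4)=(x_1,\sqrt{l/g}\,x_2,x_3/l,x_4/\sqrt{lg})$ as the paper, arriving at the same conclusion that the closed-loop dynamics in $z$-coordinates depend only on $\mu$ once $F=mg\,\pi_n(z_1,z_2,z_3,z_4)$ is inserted.

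The only difference is organisational: the paper first \emph{decouples} the two second-order equations---substituting $\ddot x$ from~(\ref{eq:qq1}) into~(\ref{eq:qq2}) to obtain an explicit expression for $dz_2/d\tau$, and then separately substituting $\ddot\theta$ from~(\ref{eq:qq2}) into~(\ref{eq:qq1}) to obtain an explicit expression for $dz_4/d\tau$---so that each transformed equation is already in first-order normal form. You instead apply the scaling directly to the coupled pair~(\ref{eq:qq1})--(\ref{eq:qq2}) and show that both reduce to relations in $(z_1,z_2,z_3,z_4,\mu)$ alone, leaving the system in implicit (coupled) second-order form. Your route is shorter and avoids the algebraic substitutions; the paper's route has the advantage of displaying the explicit state-space vector field, which makes the ``only $\mu$ remains'' claim visually immediate. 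Your closing remark that $\mu$ cannot be absorbed by any static control rescaling, and that the theorem should therefore be read as ``train at nominal $\mu$, vary $l$ freely,'' is exactly the reading the paper adopts in the sentence following its proof.
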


\begin{proof}\label{proof1}
We can express the acceleration as
\begin{equation}
    \Ddot{x}=\mu \left(\frac{F}{m}+l\Ddot{\theta}cos\theta-l\dot{\theta}^2sin\theta\right)
\end{equation}
where $\mu={m}/({M+m})$. Substituting the result into the load equation yields
\begin{equation}
\label{eq:qq3}
    l\Ddot{\theta}-\mu cos\theta\left(\frac{F}{m}+l\Ddot{\theta}cos\theta-l\dot{\theta}^2sin\theta\right)+gsin\theta=0.
\end{equation}
We change the coordinates by introducing the variables
\begin{equation}\label{sub1}
    z_1=\theta;~~~z_2=\sqrt{\frac{l}{g}}\dot{\theta}
\end{equation}
from which the following relations can be deduced
\begin{equation}\label{tauqq}
    \dot{\theta}=\sqrt{\frac{g}{l}}z_2;~~~\dot{\theta}^2=\frac{g}{l}z_2^2;~~~\Ddot{\theta}=\frac{g}{l}\frac{dz_2}{d\tau}
\end{equation}
where the time scaling is $\tau=\sqrt{g/l}~t$. These relations can be used to rewrite~\eqref{eq:qq3} as
\begin{equation}
    g\frac{dz_2}{d\tau}-\mu cosz_1\left(\frac{F}{m}+g\frac{dz_2}{d\tau}cosz_1-gz_2^2sinz_1\right)+gsinz_1=0.
\end{equation}
Therefore, the following normalized dynamics are obtained
\begin{equation}
    \frac{dz_1}{d\tau}=z_2
\end{equation}
\begin{equation}
    \frac{dz_2}{d\tau}=\frac{\mu cosz_1\left(\frac{F}{gm}-z_2^2sinz_1\right)-sinz_1}{1-\mu cos^2z_1}.
\end{equation}
This guarantees the performance for $\theta$ similar to the proof of Theorem~1.

As for the position $x$, we express the angular acceleration as
\begin{equation}\label{subs11}
    \Ddot{\theta}=\frac{1}{l}\left(cos(\theta)\ddot{x}-gsin(\theta)\right).
\end{equation}
Substituting~\eqref{subs11} in~\eqref{eq:qq1} yields
\begin{equation}\label{bigqq}
(M+m)\ddot{x}+mlsin(\theta)\dot{\theta}^2-mcos(\theta)\left(cos(\theta)\ddot{x}-gsin(\theta)\right)=F.
\end{equation}
The same time scaling $\tau$ in~\eqref{tauqq} must be used for consistency. The following coordinate transformations are applied
\begin{equation}\label{sub2}
    z_3=\frac{1}{l}x;~~~z_4=\frac{dz_3}{d\tau}
\end{equation}
from which
\begin{equation}\label{sub3}
    z_4=\frac{1}{\sqrt{lg}}\dot{x};~~~\frac{dz_4}{d\tau}=\frac{1}{g}\ddot{x}.
\end{equation}
Substituting~\eqref{sub1},\eqref{tauqq},\eqref{sub2},\eqref{sub3} in~\eqref{bigqq} yields
\begin{equation}
    \frac{dz_4}{d\tau}=\frac{\frac{F}{mg}-sin(z_1)(z_2^2+cos(z_1))}{\mu^{-1}-cos^2(z_1)}
\end{equation}
from which the proof follows using the argument used in the proof of Theorem~1.
\end{proof}
Hence, the controller, such as the RL agent, needs to be trained only for certain values of $\mu=m/(M+m)$.

\subsection{Algorithmic Homogenization}
One of the challenges of the proposed approach is the required knowledge of the dynamics of each nonlinear system. This suggests developing a tool for the automation of the homogenization process. We present the automatic homogenizer which we make available to the public on Mathworks. It is important to note that the automated approach is also applicable to systems for which an accurate dynamic model is not available.

Genetic Algorithms (GA) can effectively solve many optimization problems including the identification of homogeneity transformations. However, the representation of individuals which is based on a fixed-length string-type is unnatural and constrains the effectiveness of GA. For instance, a function has more freedom when it is expressed as a hierarchical, variable-size structure instead of a fixed-length string with variable parameters. The initial choice of the string length in GA limits in advance the complexity of the function and sets an upper bound on what it can learn.

Genetic Programming (GP)~\cite{koza} can be used to learn and enhance nonlinear maps such as control laws~\cite{haddadgp}. This mapping can be represented by a recursive function tree, where the generations are obtained using the same operations used in genetic algorithms. First, the model is placed in a feedback loop using a controller with static gains. A state feedback controller is a simple arbitrary controller to generate the response data. Arbitrary gains are used since performance is not of concern in this identification routine. Two copies of the model are used. One is for the nominal model and another for the model with a variation of the parameter under investigation.

 A homogeneity cost function $J_h=min_{\gamma_h} |y_1(t)-y_2(\gamma_h t)|$ is defined based on the response characteristics of the system. This cost is fed back to the GP routine to guide it to a better transformation structure and/or better parameters selection. This continuously modified transformations takes the sensed states and supplies the scaled states to the nominal controller, and this process keeps repeating until some termination criteria is met. The homogeneity transformations generated by GP lead to identical responses to those generated analytically.

\section{Results and Discussions}
\label{results}
\subsection{Model Training}
The neural networks' framework is in line with the requirements of the DDPG approach discussed in Section 2.2. It consists of deep actor and critic networks, denoted by $\pi$ and $Q$, respectively. The long-term reward is approximated through the critic value function representation based on the observations and actions. Thus, the critic is a deep neural network that is composed of two branches. The first branch takes the observations through a fully connected layer of $400$ neurons, a rectified linear unit (ReLU) activation function, and another dense layer of $300$ neurons. The second branch takes the actions and processes the actions through one dense layer of $300$ neurons. The outputs of the two branches are summed and passed to a ReLU activation function to yield the critic output.

The actor network takes the current observations of the simulated system and decides the actions to take. This network starts with a fully connected layer of $400$ neurons, ReLU activation, another dense layer of $300$ neurons, and finally a hyperbolic tangent activation followed by an up-scale of the output to the torque limits. The learning rate of the actor is selected as $1\times10^{-4}$ whereas a larger learning rate $1\times10^{-3}$ is set for the critic. The discount factor of $0.99$ and noise variance of $0.6$ are used. The simulation time is set to $20$ s and the sampling time of the agent is $0.05$ s, leading to a $400$ steps per simulation. These values of hyper-parameters were obtained by tuning them to achieve the required swing up task.

The reward at every time step is given by
\begin{equation}
    r_t=-(\Theta_t^2+0.1\dot{\Theta}_t^2+0.0001u_{t-1}^2).
\end{equation}
Through this reward, the reinforcement learning algorithm receives feedback to know about the goodness of the actions that have been executed. The angle of the pendulum with respect to the vertical upright position is the primary component that forces the control to bring the pendulum to the desired position. As for the second term, if the pendulum starts moving rapidly, it could be harder to stabilize and more prone to instability. By promoting a smaller angular velocity, the agent is encouraged to take actions that prevent the pendulum from gaining unnecessarily high speeds in any direction. This was observed to aid in the convergence speed of the learning process. Finally, the last term ensures that the controller is efficient in terms of energy consumption and that a feasible control signal is applied. In a realistic scenario, the measurements of the angle and the angular rate are accessible through encoders.

The parameters of the pendulum are set to $m=1$~kg and $l=9.81$ m (value of $g$) as in Section 2.3. Furthermore, the input torque is saturated to $[-30,30]$ Nm during the training such that the pendulum cannot swing up without oscillating. The simulation environment is in MATLAB/Simulink with variable sampling time by Dormand-Prince solver. The average episode reward converged to the maximum at around 700 episodes, and further exploration was not required.

\subsection{Simulation Results}
In this subsection, the simulation results corresponding to the two presented theorems are provided. The code to reproduce the results is open-sourced on Matlab Central\footnote{https://www.mathworks.com/matlabcentral/fileexchange/154782-reinforcement-learning-with-homogeneity-transformations}.
\subsubsection{Inverted Pendulum}
This section shows the simulations of the inverted pendulum. The pendulum starts at the downward position which corresponds to $\pi$ rad and is swung up to the balanced position of $0$ rad. Also, the angle is wrapped to the interval $[-\pi,\pi]$. The action which is the torque, is continuous and its absolute value is saturated at $50$ Nm. The variation in the mass from $1$ kg to $2$ kg does not impact the performance as in Fig.~\ref{resp1}. The mass is not expected to change the performance due to the cancellation of the mass term in the equations of motion by the scaling in the homogeneous control $u$.
\begin{figure}[htbp]
\centerline{\includegraphics[width=9cm]{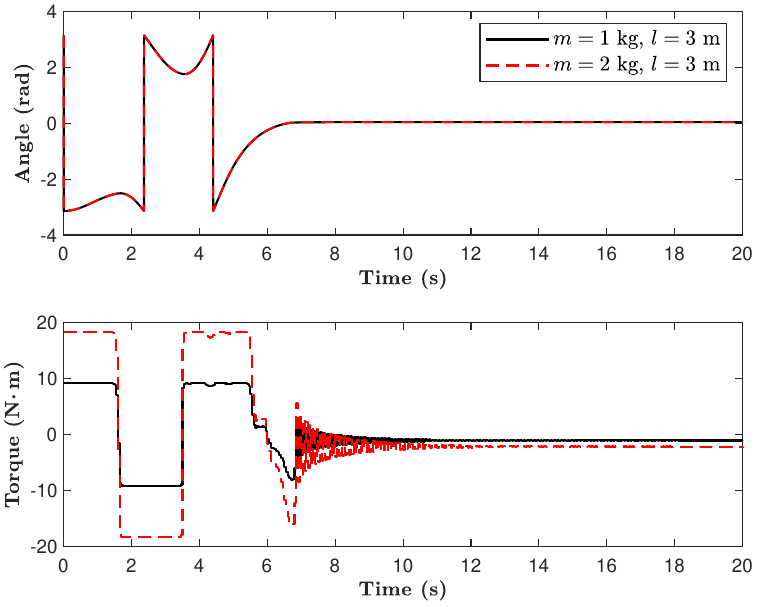}}
\caption{System response and control effort using the homogeneous RL policy with different mass values. The homogeneous policy resulted in a successful swing-up of the pendulum with exactly the same performance.}
\label{resp1}
\end{figure}
The change in the length results in slower or faster swing-up but without impact on the amplitude profile and the success of the swing-up as depicted in Fig.~\ref{resp2}.
\begin{figure}[H]
\centerline{\includegraphics[width=9cm]{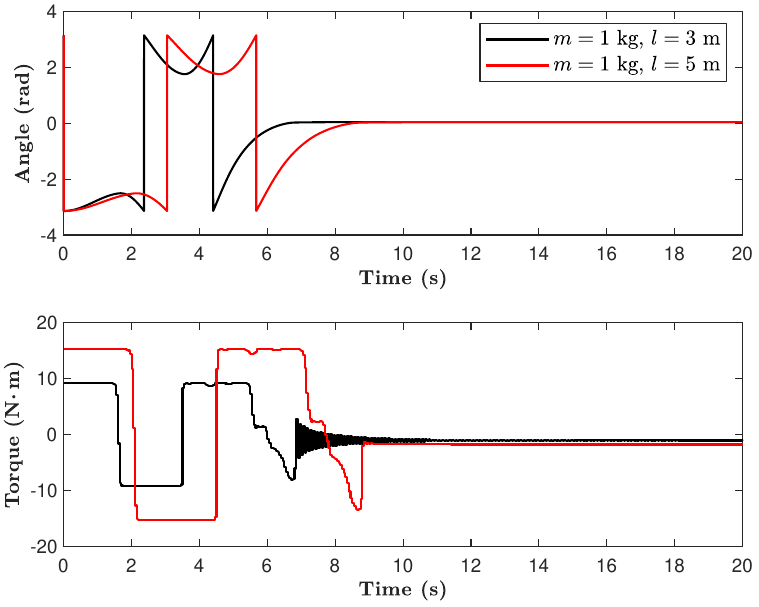}}
\caption{System response and control effort using the homogeneous RL policy where the length value is changed. The swing-up is successful with a different time scale.}
\label{resp2}
\end{figure}
The results also show a successful response of the RL agent trained with DR and another failed response. The training of the RL agent with DR was performed by varying the length and the mass independently over the intervals $[6,8]$ m and $[0.5,2.5]$ kg through a uniform distribution. Although the parameters are varied within the randomization interval used during training, DR-based policy failed as shown in Fig.~\ref{resp3}. Note also that the control effort in the successful attempt exhibits high frequency oscillations in the DR approach. The figure also shows the application of the proposed homogeneous DDPG policy which is able to swing up and stabilize the pendulum for all cases of parameter variations with a smooth control effort.
\begin{figure}[htbp]
\centerline{\includegraphics[width=9cm]{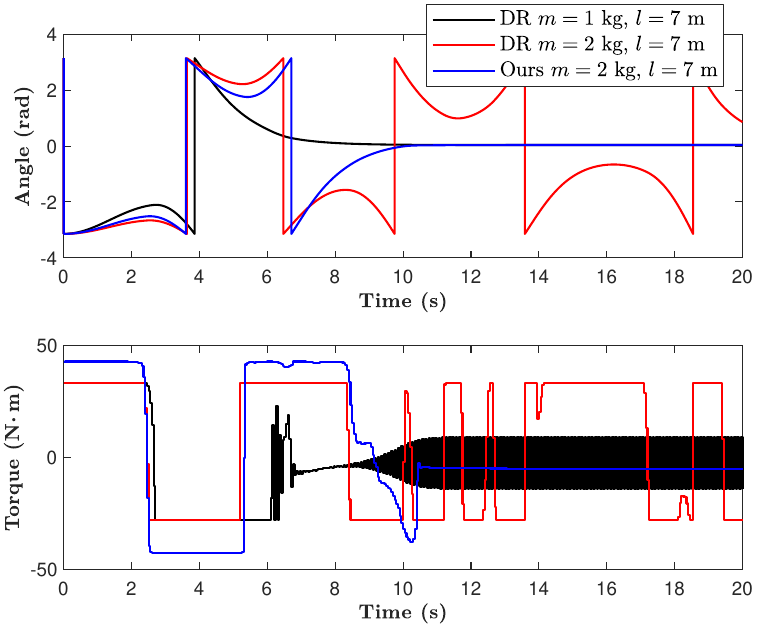}}
\caption{System angle response and control effort using the RL policy generated with DR and another using our proposed homogeneous DDPG controller.}
\label{resp3}
\end{figure}

The proposed method is tested also when the parameters of the system are unknown. Fig.~\ref{parameters} shows the estimations of the length and the mass along with their true values. The initial estimate of the length is $6$ m while the true value is initially $7$ m. Furthermore, the length changes to $8$ m in a step at $t=4$ s. For the mass, the initial estimate is $2.5$ kg whereas the true initial value is $2$ kg. The mass steps up to $3$ kg at $t=2$ s. The estimations are performed in real-time and converge to the true values while the pendulum is swinging up. A forgetting factor of $0.998$ is used to discount the older data and facilitate the convergence.
\begin{figure}[htbp]
\centerline{\includegraphics[width=9cm]{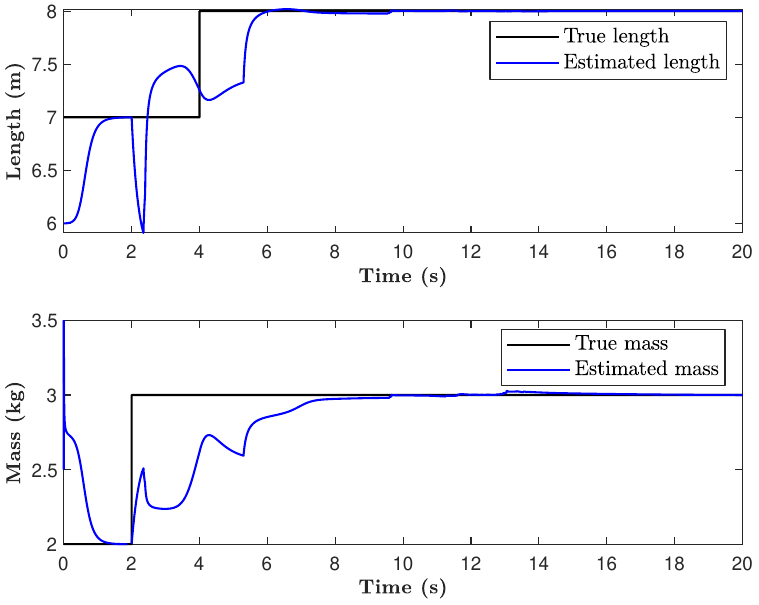}}
\caption{True values of the length and mass along with their online estimations using RLS. The value of the forgetting factor was selected empirically to strike a balance between the speed of convergence and estimation accuracy.}
\label{parameters}
\end{figure}

The response of the pendulum and the applied torque are shown in Fig.~\ref{resp_var}. It can be seen that the quick estimation of the parameters using RLS leads to a successful swing up. The robustness inherited by the RLS adaptation loop allows the proposed method to be used with systems with uncertainties in their parameters.
\begin{figure}[htbp]
\centerline{\includegraphics[width=9cm]{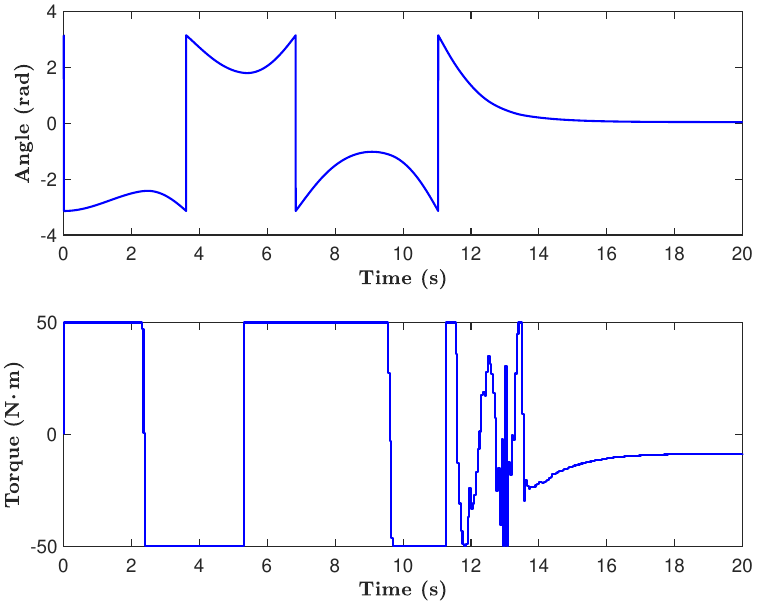}}
\caption{System response and control torque under the real-time variation of the length and mass.}
\label{resp_var}
\end{figure}

\subsubsection{Driver with Load}
This section shows the simulation results of the driver with load using the RL policy and the homogeneity transformations derived in Theorem 2. The position of the UAV and the load angle are shown in Fig.~\ref{sim_pos} and Fig.~\ref{sim_ang}, respectively. The cable length is 1 m in the nominal model whereas a length of 2.5 m is used in the varied case. It can be seen that the homogeneity property is maintained in both figures.
\begin{figure}[htbp]
\centerline{\includegraphics[width=9cm]{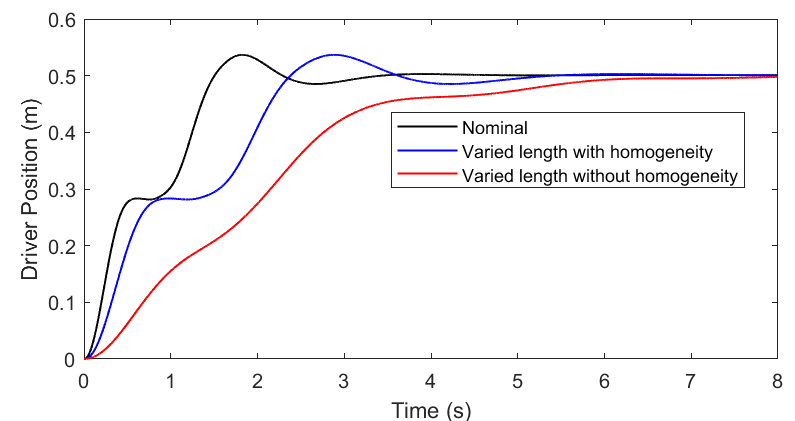}}
\caption{Driver-load position response in simulation. The response of the homogenized case is scaled to overlay the responses.}
\label{sim_pos}
\end{figure}

\begin{figure}[htbp]
\centerline{\includegraphics[width=9cm]{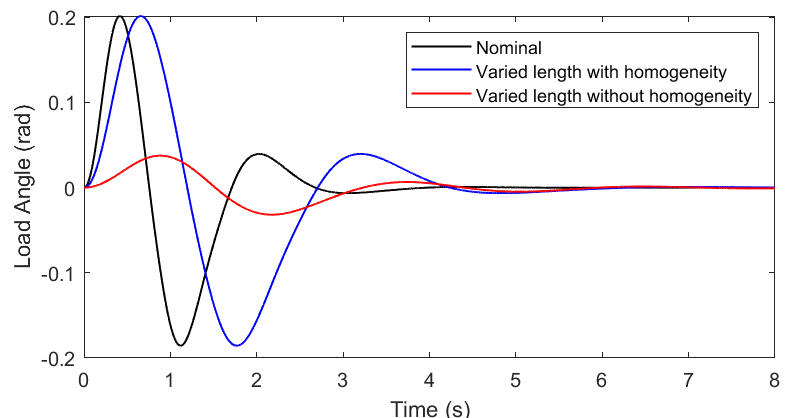}}
\caption{Driver-load angle response in simulation. The systems with homogeneity transformations exhibit identical amplitudes at different time scales.}
\label{sim_ang}
\end{figure}

\subsection{Experimental Results}
The real world experiments are performed on a quadrotor with slung load. We show the performance in bringing the load swing to a specific angle in comparison with the modern RL-based policies. The results can be seen as well in the accompanying video\footnote{\url{https://youtu.be/3VtwJI-p_T8}}. The block diagram of the control structure is shown in Fig.~\ref{sim_UAV}. 
\begin{figure}[htbp]
\centerline{\includegraphics[width=9cm]{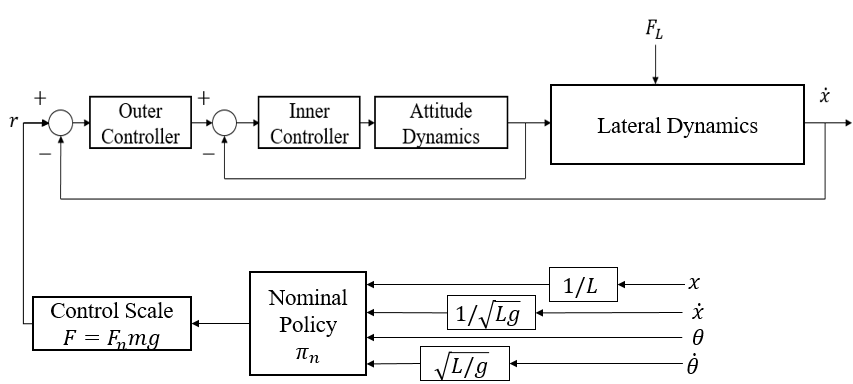}}
\caption{UAV with slung load system block diagram.}
\label{sim_UAV}
\end{figure}

The experimental setup is based on the Quanser QDrone. It weighs $M=850$ g with a thrust-to-weight ratio of 1.9 allowing it to carry $m_{max}=300$ g with a reasonable maneuvering ability. The on-board processing is performed on the Intel Aero Compute unit which is available off the shelf. The Madgwick
filter~\cite{mad} estimates the attitude of the QDrone using the BMI160 inertial measurement unit mounted on board. The position and yaw measurements of both the drone and the load is provided by the OptiTrack motion capture system at 250 Hz. Further details about the UAV parameters can be found in~\cite{dnn}. The angles that the load makes with respect to the drone are deduced from their relative positions. The payload is manufactured from mild steel to reduce its size and minimize its interaction with the propellers. The load weighs $m_b=62$ g and it is equipped with three markers for localization.

The experiments are performed using a nominal RL policy that has been trained to move the UAV in one direction by 0.75 m. The nominal case is a UAV with a slung load that is attached using a cable of length $L=0.5$ m as shown in Fig.~\ref{setup} whereas the cable length is $0.75$ m in the varied case. The homogeneity transformations are those derived in Theorem 2. The additional dynamics of the quadrotor were approximated by a time delay and the control signal was delayed to compensate for that.
\begin{figure}[htbp]
\centerline{\includegraphics[width=6cm]{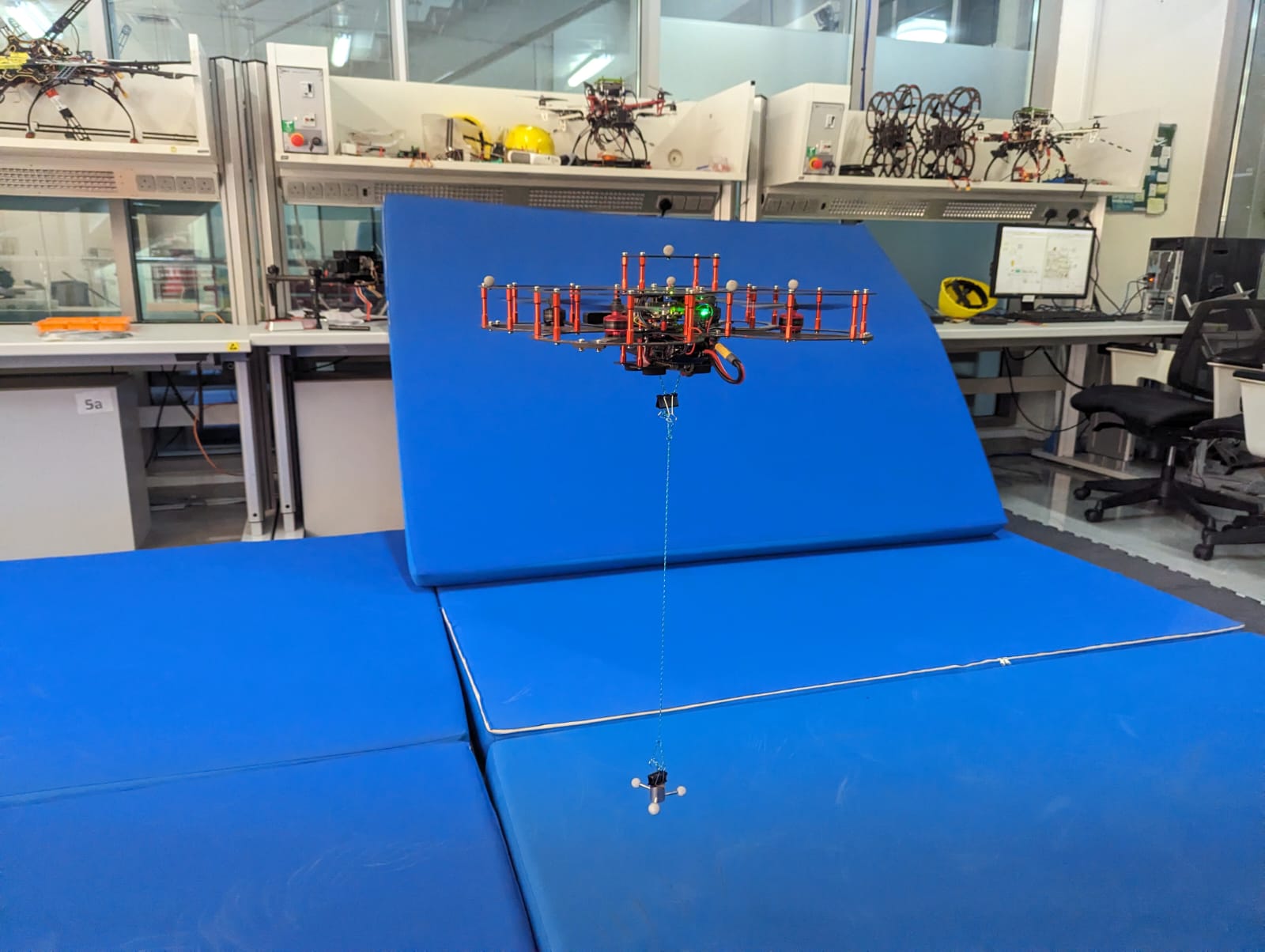}}
\caption{The experimental setup. A slung load is attached to a mid-size quadrotor through a cable. The cable length is changed between experiments to demonstrate the dual-scale homogeneity property.}
\label{setup}
\end{figure}

The position of the UAV is shown in Fig.~\ref{exp1} where the system with longer cable length has an identical amplitude profile to the shorter length which agrees with the simulation results. It also illustrates the homogeneity property that holds in the angle response as it was demonstrated in the theoretical proofs and simulations. Although the homogeneity property is not exact due to the additional dynamics of the UAV and its controllers, the transformations maintained the first peaks almost exactly at the same amplitude with an expansion in the time-scale. Hence, the method can be used effectively in applications where exact angle amplitude is required such as bringing loads to a desired elevated surface.
\begin{figure}[htbp]
\centerline{\includegraphics[width=9cm]{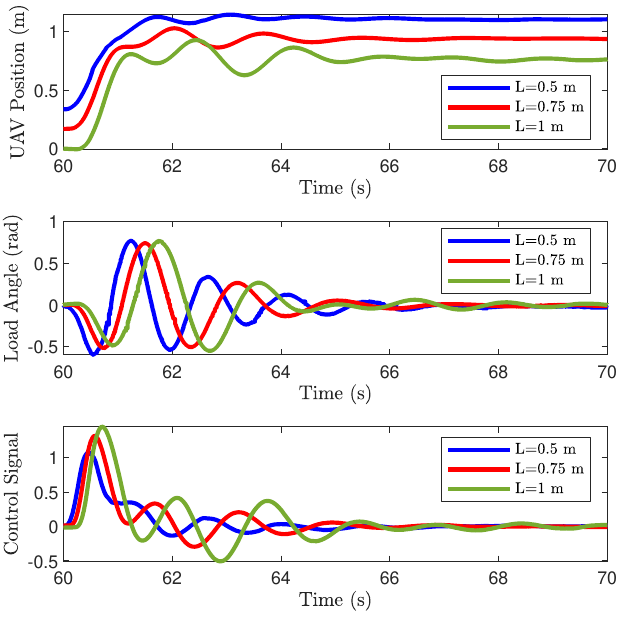}}
\caption{Experimental homogeneous responses of the UAV position, load angle, and control signal for three cable lengths.}
\label{exp1}
\end{figure}

The case in which the cable length is varied without applying homogeneity transformations is depicted in Fig.~\ref{exp2}. The feedback signals from the angle were disabled to observe a higher number of oscillations for illustration purposes. It can be seen that the oscillations have a very different spatial profile in this case.
\begin{figure}[htbp]
\centerline{\includegraphics[width=9cm]{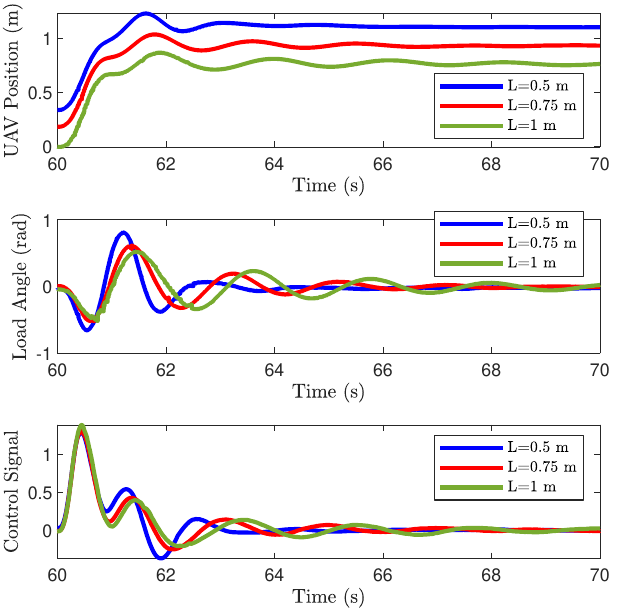}}
\caption{Experimental responses of the UAV position, load angle, and control signal for three cable lengths. No homogeneity transformations were used in this setting.}
\label{exp2}
\end{figure}

The impact of the presented results can be harnessed in applications where the load swing can be used during the pick up or place process. This is acheivable through the homogeneity transformations which allow the load to swing to a specific position regardless of the change in the cable length. The nonhomogeneous load spatial response in Fig.~\ref{spatial1} shows a high deviation in the load position at the first swing when the cable length is varied.
\begin{figure}[htbp]
\centerline{\includegraphics[width=9cm]{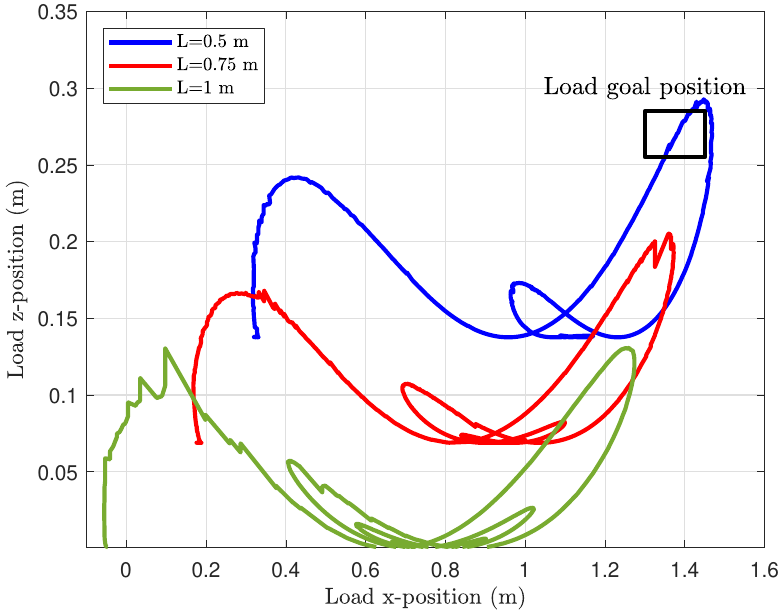}}
\caption{Experimental spatial responses of the load position for three cable lengths. No homogeneity transformations were used in this setting.}
\label{spatial1}
\end{figure}

The situation is different when homogeneous transformations are applied to the RL policy as depicted in the responses of Fig.~\ref{spatial2}. The load reaches the goal position at the first oscillation as guaranteed by the homogeneity property. This happens for all three cable lengths, in fact, for any cable length as long as the physical space is sufficient and motor saturation is not reached. Given that the oscillation angle is governed by the homogeneity property, the initial position of the quadrotor is adjusted in both the $x$ and $z$ positions in order for the load to reach its target despite the change in the cable length. The step size of the quadrotor horizontal position which results in the $45^{\circ}$ peak oscillation is $0.75$ m. This is a design choice based on the relative elevation of the goal position compared to the UAV, and it can be included in the policy training to have it as a reference input. The homogenized RL policy achieves a success rate of 96\% in bringing the load to its designated target with a 3D RMSE of $0.0253$ m based on 25 trials.
\begin{figure}[htbp]
\centerline{\includegraphics[width=9cm]{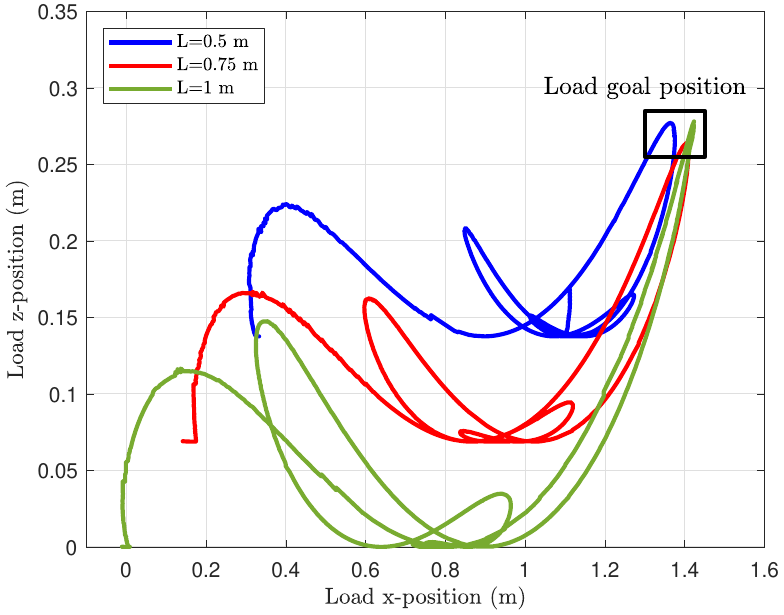}}
\caption{Experimental spatial responses of the load position for three cable lengths. Homogeneity transformations are applied.}
\label{spatial2}
\end{figure}

\section{Conclusions}
\label{conclusion}
This paper presents a new homogeneity property for dynamic systems for which spatio-temporal scaling holds. A control is developed for specific nonlinear systems that mathematically enforces the dual-scale homogeneity under any perturbation in the system's physical parameters. To demonstrate the proposed algorithm and its applicability to modern control techniques, the base control policy is generated by a deep deterministic policy gradient to swing up and stabilize an inverted pendulum for a single set of parameters. The modified control policy showed successful swing up and stabilization for different sets of mass and length parameters. The proposed algorithm in which classical control and reinforcement control are combined compares favorably to domain randomization and has a strong potential for the control of complex nonlinear systems. Extension of this work may include theoretical derivation of the effect and mitigation of time delay in the dual-scale homogeneity.

\section*{Acknowledgments}
This work was supported by the project fund RIG-2023-076 and Award RCI-2018-KUCARS, Khalifa University.

\bibliographystyle{IEEEtran}
\bibliography{bibli}

\begin{IEEEbiography}[{\includegraphics[width=1in,height=1.25in,clip,keepaspectratio]{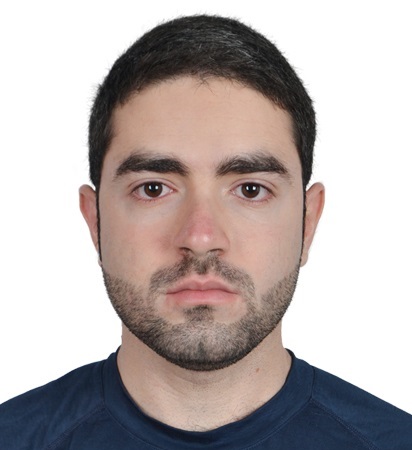}}]{Abdel Gafoor Haddad} earned his MSc degree from Khalifa University, Abu Dhabi, United Arab Emirates. He is currently pursuing his PhD degree in Engineering - Robotics concentration at Khalifa University. His research focuses on control systems and reinforcement learning for robotics applications.
\end{IEEEbiography}

\begin{IEEEbiography}[{\includegraphics[width=1in,height=1.25in,clip,keepaspectratio]{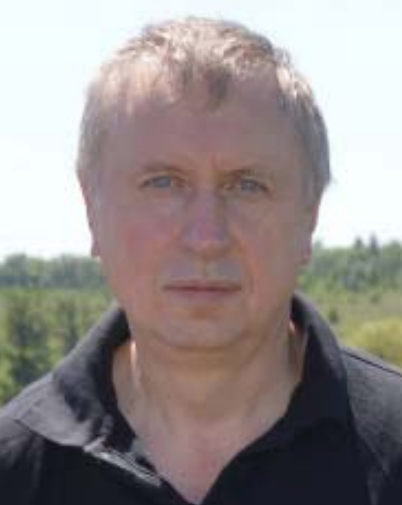}}]{Igor Boiko} (Senior Member, IEEE) received the M.Sc., Ph.D., and D.Sc. degrees from Tula State University, Tula, Russia, and Higher Attestation Commission, Russia, in 1984, 1990, and 2009, respectively. Currently, he is a Professor at Khalifa University, Abu Dhabi, UAE. His research interests include frequency-domain methods of analysis and design of nonlinear systems, discontinuous and sliding mode control systems, PID control, and process control theory and applications.
\end{IEEEbiography}

\begin{IEEEbiography}[{\includegraphics[width=1in,height=1.25in,clip,keepaspectratio]{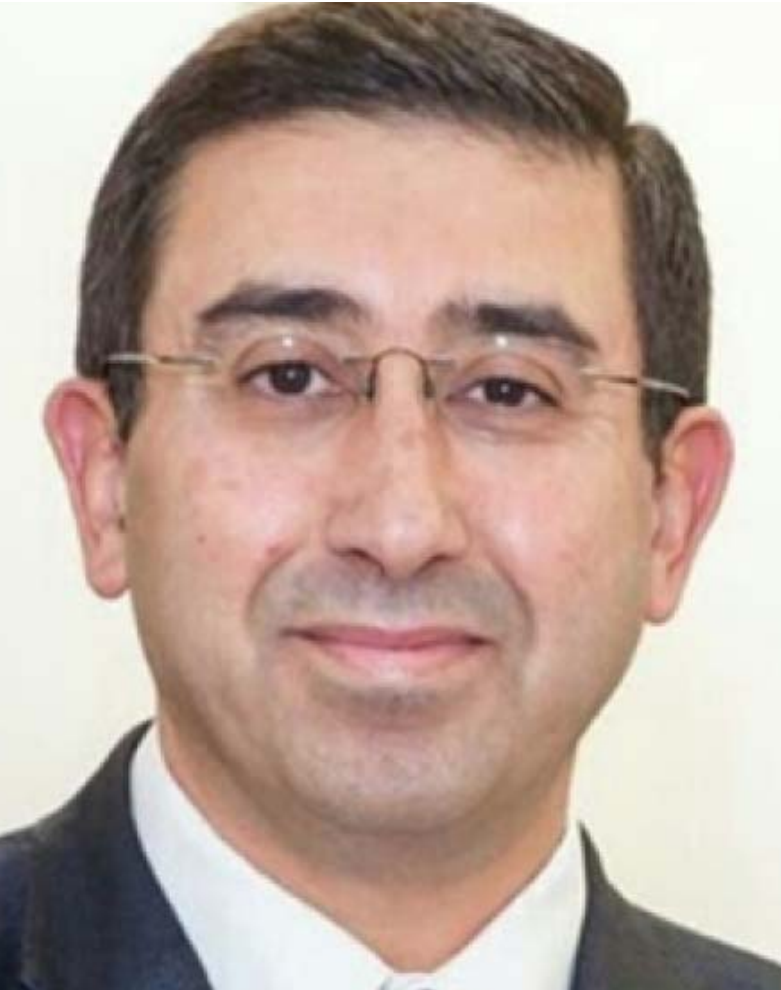}}]{Yahya Zweiri} (Member, IEEE) obtained his Ph.D. degree from King's College London. He is currently a Professor at the Department of Aerospace Engineering and the Director of the Advanced Research and Innovation Center, Khalifa University, United Arab Emirates. Over the past two decades, he has actively participated in defense and security research projects at institutions such as the Defense Science and Technology Laboratory, King's College London, and the King Abdullah II Design and Development Bureau in Jordan. Dr. Zweiri has a prolific publication record, with over 130 refereed journals and conference papers, as well as ten filed patents in the USA and UK. His primary research focus centers around robotic systems for challenging environments, with a specific emphasis on applied AI and neuromorphic vision systems
\end{IEEEbiography}

\vfill

\end{document}